\renewcommand{\thesubfigure}{\thefigure.\arabic{subfigure}}
\renewcommand{\p@subfigure}{}
\renewcommand{\@thesubfigure}{\thesubfigure:\hskip\subfiglabelskip}
\def\square{\pst@object{square}}
\def\square@i(#1,#2)#3{{\use@par\solid@star\psframe[origin={#1,#2}](#3,#3)}}
\DeclareFontFamily{U}{tipa}{}
\DeclareFontShape{U}{tipa}{bx}{n}{<->tipabx10}{}
\newcommand{\arc@char}{{\usefont{U}{tipa}{bx}{n}\symbol{62}}}%
\newcommand{\arc}[1]{\mathpalette\arc@arc{#1}}
\newcommand{\arc@arc}[2]{%
  \sbox0{$\m@th#1#2$}%
  \vbox{
    \hbox{\resizebox{\wd0}{\height}{\arc@char}}
    \nointerlineskip
    \box0
  }%
}
\newcommand{\doublewedge}{\big@doubleop{\wedge}}
\newcommand{\big@doubleop}[1]{%
  \DOTSB\mathop{\mathpalette\big@doubleop@aux{#1}}\slimits@
}
\newcommand\big@doubleop@aux[2]{%
  \sbox\z@{$\m@th#1#2$}%
  \makebox[1.35\wd\z@][s]{$\m@th#1#2\hss#2$}%
}
\newcommand{\abs}[1]{\left|#1\right|}     
\theoremstyle{plain}
\newtheorem{theorem}{Theorem}
\newtheorem{lemma}{Lemma}
\newtheorem{remark}{Remark}
\newtheorem{definition}{Definition}
\newtheorem{example}{Example}
\newtheorem{axiom}{Axiom}
\begin{document}

\title{Characteristics of Vibrating Systems having Time-Constrained Energy}

\author[E. Cui]{E. Cui}
\address{
Department of Electrical \& Computer Engineering,
University of Manitoba, WPG, MB, R3T 5V6, Canada
}
\email{cuie@myumanitoba.ca}

\author[J.F. Peters]{J.F. Peters}
\address{
Department of Electrical \& Computer Engineering,
University of Manitoba, WPG, MB, R3T 5V6, Canada and
Department of Mathematics, Faculty of Arts and Sciences, Ad\.{i}yaman University, 02040 Ad\.{i}yaman, Turkey,
}
\email{james.peters3@umanitoba.ca}

\subjclass[2020]{74H45 (Vibrations in Dynamical Systems),76F20 (Dynamical Systems), 60E10 (Characteristic functions)}

\date{}

\begin{abstract}
This paper introduces an axiomatic basis for measuring the energy characteristic of vibrating dynamical systems. 
The basic approach is to compare non-modulated vs. modulated waveforms in measuring energy during the vibratory motion $m(t)$ at time $t$ of moving object such as off-road vehicle oscillating movements recorded in an infrared (IR) video. Modulation of $m(t)$ is achieved either physically by adjusting the load on a spring system or geometrically by adjusting the frequency $\omega$ of the Euler exponential in  \boxed{m(t)e^{\pm j \omega t}dt}.   Expenditure of energy $E_{m(t)}$ by a system is measured in terms of the area bounded by the motion $m(t)$ waveform at time $t$.
\end{abstract}
%
\keywords{characteristic, clock, dynamical system, energy modulate, vibrating system}
\maketitle
\tableofcontents

\section{Introduction}

The collection of oscillatory parts of a vibrating dynamical system recorded in a single IR video frame, constructs a motion vector field~\cite[p.93]{Palis1982}. For a detailed view of the basics of IR images, see~\cite{IRpaper}. For a typical vector field portrait~\cite{Parilla2024vectorFields,Parilla2017vectorFields,Parilla2022vectorFields}, see Fig.~\ref{fig:vectorfield}. Each one of the motion field vectors in a vibrating system is recorded in an IR video frame like the one in Fig.~\ref{fig:bothframe}. Motion waveforms of vibrating dynamical system have varying cyclic oscillations~\cite{DeLeoYork2024,Feldman2011vibration}.

\begin{figure}[!ht]
	\centering
	\includegraphics[width=55mm]{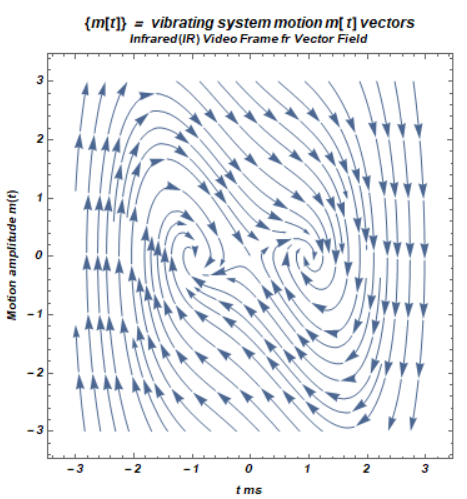}
	\caption{Sample IR frame vector field portrait~\cite{DeLeoYork2024}}
	\label{fig:vectorfield}
\end{figure}

\begin{figure}[!ht]
	\centering
	\includegraphics[width=0.75\textwidth]{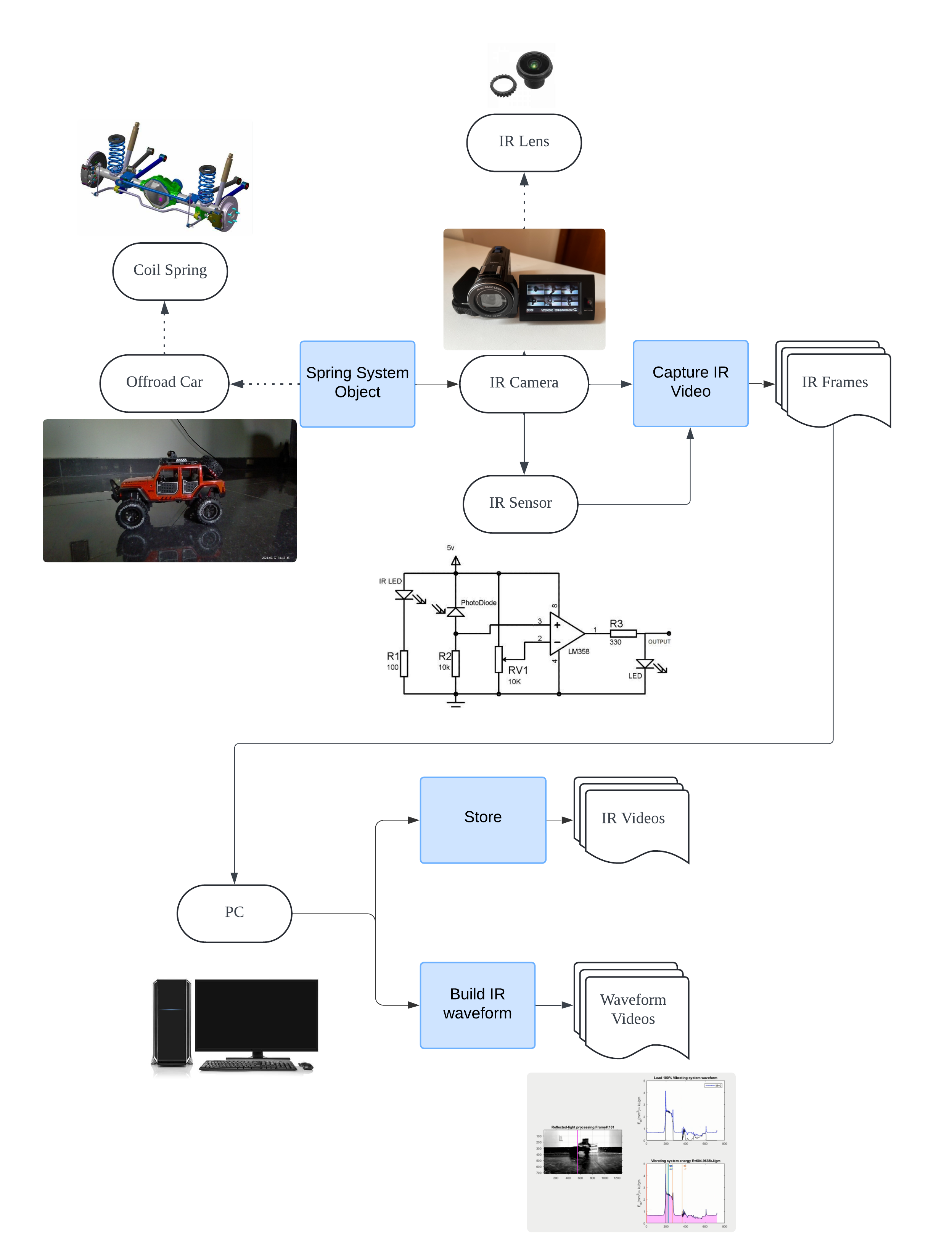}
	\caption{Flow chart of the IR spring system experiment}
	\label{fig:flowchartIR}
\end{figure} 

\begin{figure}[!ht]
	\subfigure[Loaded system with 100\% loads in frame 97]{
	\centering
	\includegraphics[width=0.4\textwidth]{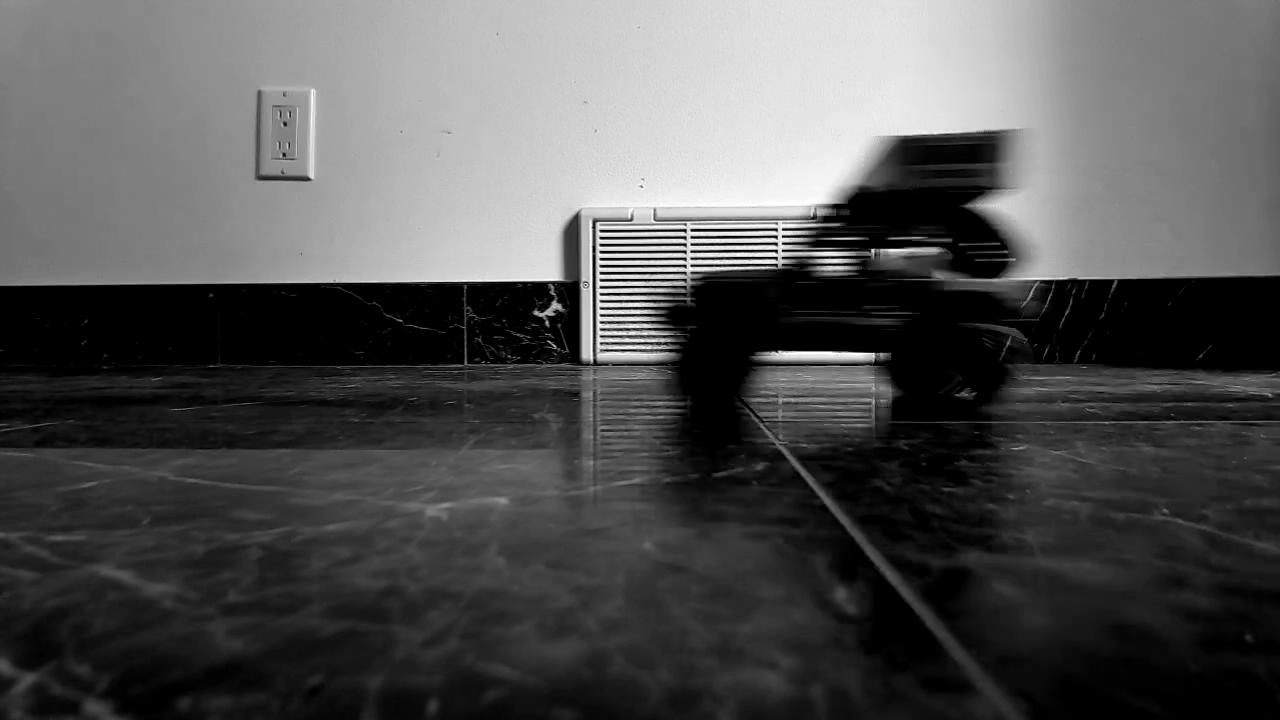}
	}\label{fig:case1frame}
	\subfigure[Spring system Without a Load in frame 89]{
	\centering
	\includegraphics[width=0.4\textwidth]{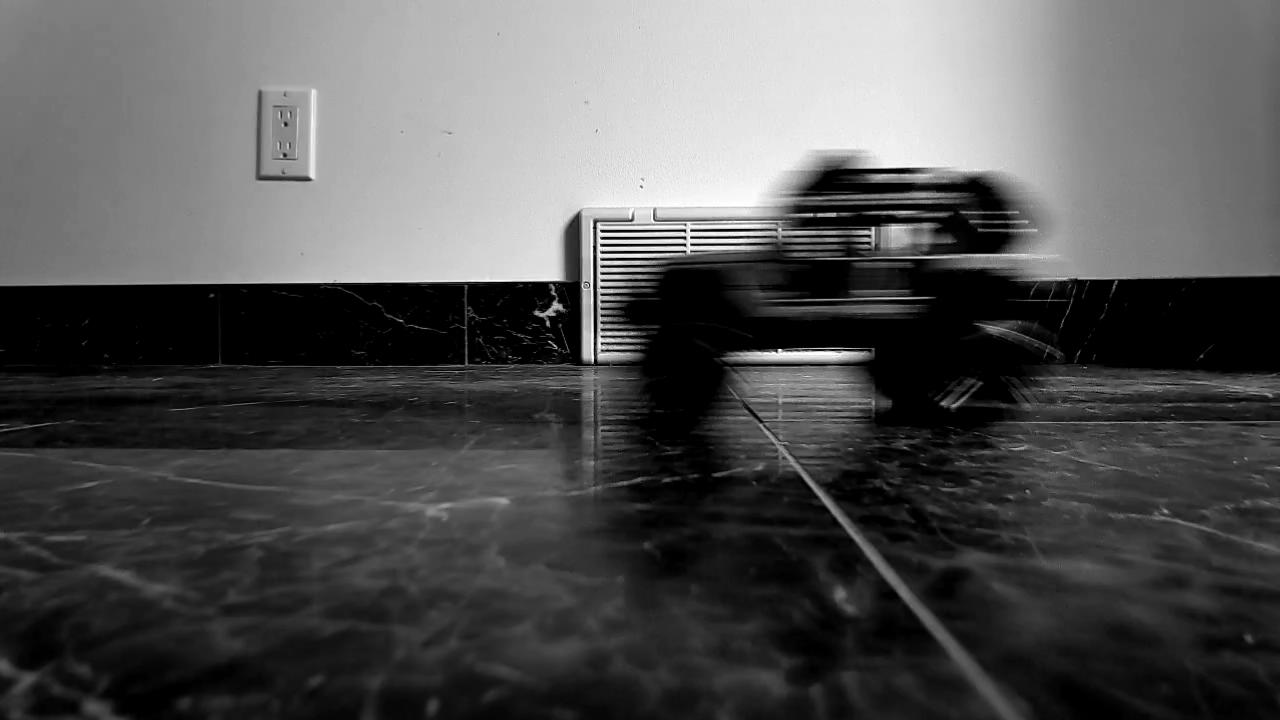}
	}\label{fig:case2frame}
	\caption{Loaded system and unloaded system IR videos}
	\label{fig:bothframe}	
\end{figure} 

\begin{figure}[!ht]
	\subfigure[Waveform of loaded spring system in frame 97]{
	\centering
	\includegraphics[width=0.4\textwidth]{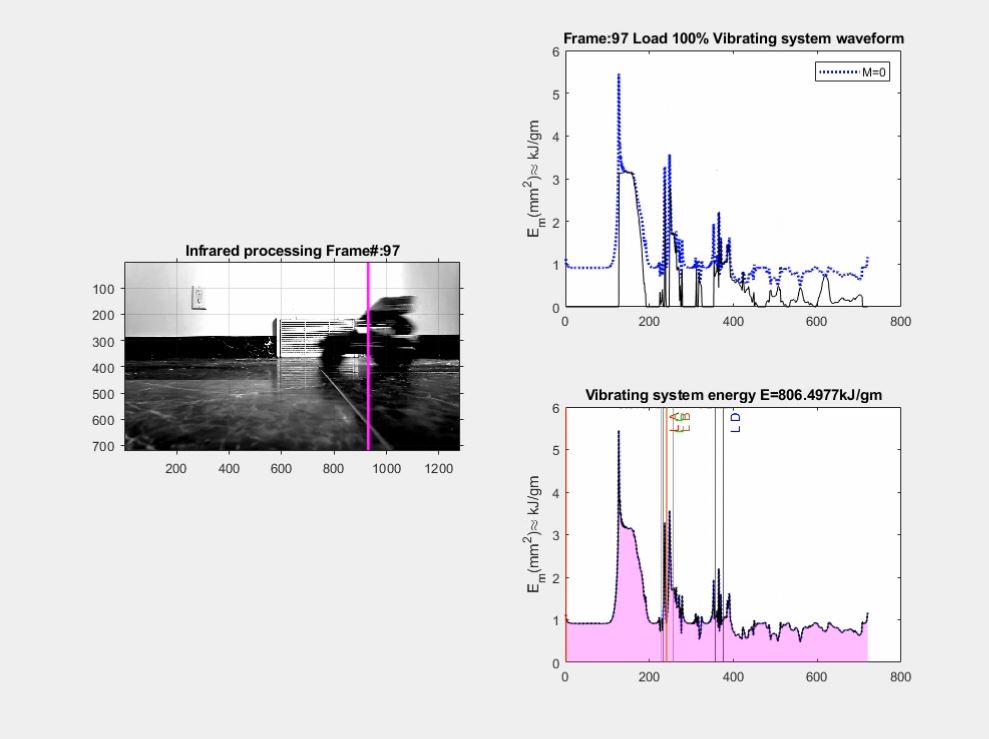}
	}\label{fig:case1}
	\subfigure[Modulated waveform of spring system without a load in frame 89]{
	\centering
	\includegraphics[width=0.4\textwidth]{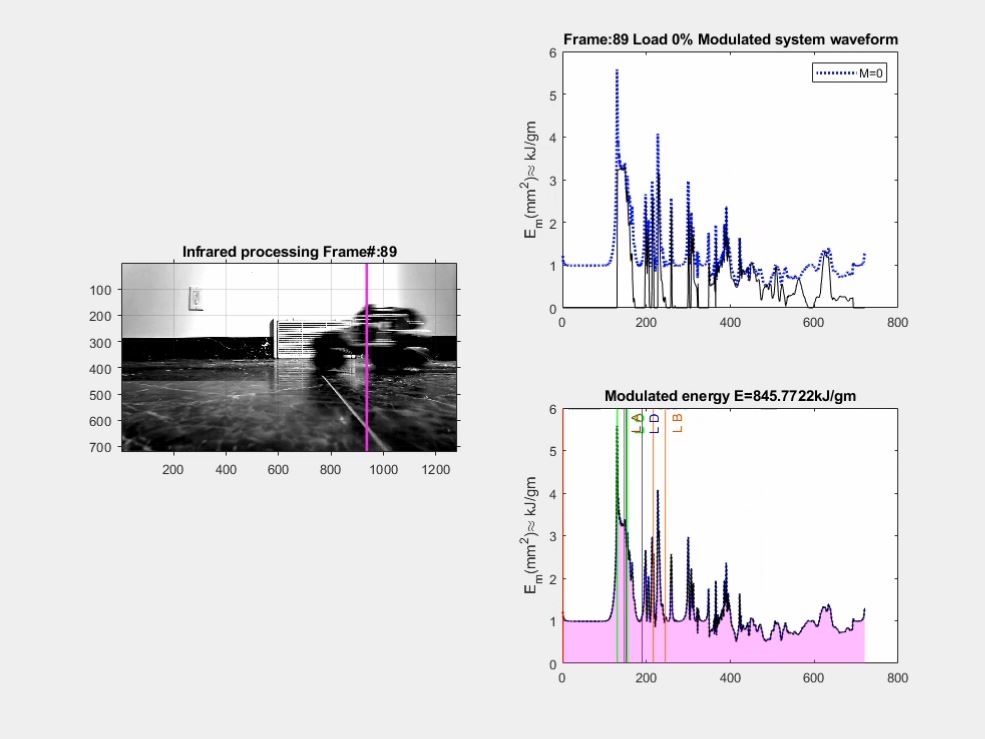}
	}\label{fig:case2}
	\caption{Loaded system waveform and modulated unloaded system waveform}
	\label{fig:caseall}	
\end{figure} 

\begin{figure}[!ht]

	
\end{figure}

\begin{table}[!ht]
    \centering
    \begin{tabular}{|l|l|l|l|l|l|l|l|l|l|}
    \hline
        Frames & Lobe A & Lobe B & Lobe C & Lobe D  \\ \hline
        95 & 85.93134179 & 50.61006225 & 18.14882145 & 21.86357325  \\ \hline
        96 & 85.93134179 & 34.72112252 & 18.14882145 & 21.86357325  \\ \hline
        97 & 85.93134179 & 42.19765703 & 18.14882145 & 21.86357325  \\ \hline
        98 & 85.93134179 & 35.33620105 & 18.14882145 & 21.86357325  \\ \hline
        99 & 85.93134179 & 50.61006225 & 18.14882145 & 21.86357325  \\ \hline
        100 & 85.93134179 & 50.61006225 & 18.14882145 & 21.86357325  \\ \hline
    \end{tabular}
		\caption{Case 1:  Physically modulated spring system energy}
		\label{table:case1}
\end{table}

\begin{table}[!ht]
		\centering
    \begin{tabular}{|l|l|l|l|l|l|l|l|l|l|}
    \hline
        Frames & Lobe A & Lobe B & Lobe C & Lobe D  \\ \hline
        86 & 76.23429076 & 28.33540646 & 21.84815468 & 14.94586226 \\ \hline
        87 & 76.23429076 & 24.91318173 & 21.84815468 & 14.94586226  \\ \hline
        88 & 76.23429076 & 28.33540646 & 20.51180804 & 14.94586226  \\ \hline
        90 & 76.23429076 & 23.16316581 & 21.84815468 & 14.94586226  \\ \hline
        91 & 76.23429076 & 28.33540646 & 21.84815468 & 14.94586226  \\ \hline
        92 & 76.23429076 & 28.33540646 & 21.84815468 & 14.94586226  \\ \hline
    \end{tabular}
		\caption{Case 2:  Euler exponentially modulated spring system energy}
    \label{table:case2}
\end{table}



This paper introduces an axiomatic basis for measuring the energy characteristic of vibrating dynamical systems.  For a motion waveform $m(t)$ at time $t$, measure of motion energy $E_{m(t)}$ is defined in terms of non-modulated energy $E_{nmod}$ and modulated energy $E_{mod}(t)$, i.e.,
\begin{center}
\vspace*{0.2cm}
\boxed{\boldsymbol{
E_{nmod(t)} = \int_{t_0}^{t_k}\abs{m(t)}^2dt 
}}
\end{center}
\vspace*{0.2cm}

Briefly, non-modulated energy $E_{nmod(t)}$ is modulated using the Euler exponential~\cite{Euler1748} $e^{j\omega t}$ to obtain modulated energy $E_{\mod(t)}$ with vary frequency $\omega$, i.e.,
\begin{center}
\vspace*{0.2cm}
\boxed{\boldsymbol{
E_{mod}(t) = \int_{t_0}^{t_k}\abs{m(t)e^{j\omega t}}^2dt. 
}}
\end{center} 
\vspace*{0.2cm} 

\begin{table}[!ht]\label{table:symbols}
\begin{center}
\begin{tabular}{|c|c|}
\hline
Symbol & Meaning\\ 
\hline\hline
$2^A$ & Collection of subsets of a nonempty set $A$\\
\hline
$A_i\in 2^A$ & Subset $A_i$ that is a member of $2^A$\\
\hline
$\mathbb{C}$ & Complex plane\\
\hline
$t$ & Instants Clock tick (cf. Atomic clock~\cite{Formichella2017})\\
\hline
$m(t)$ & Motion characteristic at time $t$\\
\hline
$\omega$ & Waveform Oscillation Frequency\\
\hline
$E_{m(t)}$ & Energy of motion $m(t)$ waveform at time $t$\\
\hline
$\varphi_t:2^A\to \mathbb{C}$ & $\varphi$ maps $2^A$ to complex plane $\mathbb{C}$ at time $t$\\
\hline
$\varphi_t(A_i\in 2^A)\in\mathbb{C}$ & $\varphi(A_i\in 2^A)\in\mathbb{C}$ (in complex plane at time $t$).\\
\hline
\end{tabular}
\caption{Principal Symbols Used in this Paper}
\end{center}
\end{table}
\vspace*{0.2cm}

\section{Preliminaries}
Highly oscillatory, non-periodic waveforms provide a useful portrait of vibrating system behavior.  Included in this paper is an axiomatic basis for measuring the energy characteristic of dynamical systems.  Briefly, a {\bf characteristic} is a mapping $\varphi_t:A_i\to \mathbb{C}$, which maps a subsystem $A_i$ in a system $A$ to a point in the complex plane $\mathbb{C}$.

\begin{definition}\label{def:system}{\bf(System)}.\\
A {\bf system} $A$ is a collection of interconnected components (subsystems $A_i\in 2^A$) with input-output relationships.
\qquad \textcolor{blue}{$\blacksquare$}
\end{definition}
\vspace*{0.2cm}

\begin{definition}\label{def:dynamicalSystem0}{\bf(Dynamical System)}.\\
A {\bf dynamical system} is a time-constrained physical system.
\qquad \textcolor{blue}{$\blacksquare$}
\end{definition}
\vspace*{0.2cm}

\begin{definition}\label{def:dSwaveform}{\bf(Dynamical System Output Waveform)}.\\
The output of a {\bf dynamical system} is a time-constrained sequence of discrete values.
\qquad \textcolor{blue}{$\blacksquare$}
\end{definition}
\vspace*{0.2cm}

A motion waveform is a graphical portrait of the radiation emitted by moving system (e.g., off-rode vehicle, walker, runner, biker) with oscillatory output.

\begin{figure}[!ht]
	\centering
	\includegraphics[width=65mm]{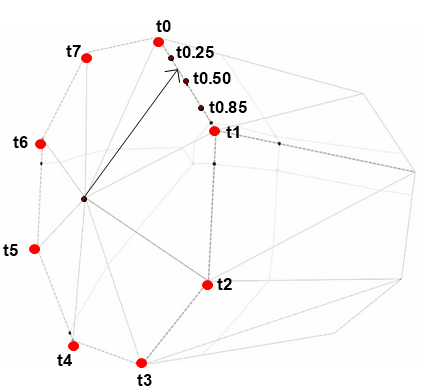}
	\caption{Geometric view of a Morse instants clock~\cite{Ludmany2023}, cf. precision of atomic clock~\cite{Formichella2017}}
	\label{fig:MorseClock}
\end{figure}

\begin{axiom}\label{axiom:clock}{\bf (Instants Clock)~\cite{JPTL2024}}.\\
Every system has its own instants clock, which is a cyclic mechanism that is a simple closed curve with an instant hand with one end of the instant hand at the centroid of the cycle and the other end tangent to a curve point indicating an elapsed time in the motion of a vibrating system.  A clock tick occurs at every instant that a system changes its state. 
\qquad \textcolor{blue}{$\blacksquare$}
\end{axiom}
\vspace*{0.2cm}

\begin{remark}{\bf (What Euler tells us about time)}.\\
On an instants clock, every reading $t\in \mathbb(C)$, a point \boxed{t = a+jb, a,b\in\mathbb{R}} in the complex plane.
For example, \boxed{t_{0.25}=0.25 + j0 = 0.25} in Fig.~\ref{fig:MorseClock}. The Morse instants clock is also called a homographic clock~\cite{HofmannKasner1928}, since the tip of an instant clock $t$-hand moves on the circumference a circle, where $t$ is a complex number~\cite{Kasner1928}. For $t$ at the tip of a vector with radius $r$, angle $\theta$ and \boxed{a = rcos\theta,b=rsin\theta} in the complex plane, then
\begin{center}
\boxed{\boldsymbol{
t = a+jb = rcos\theta + jrsin\theta = e^{j\theta}.
}}
\end{center}
\vspace*{0.2cm}
An instant of time viewed as an exponential is inspired by Euler~\cite{Euler1748}. \qquad \textcolor{blue}{$\blacksquare$}
\end{remark}
\vspace*{0.2cm}

\begin{example}
A sample Morse instants clock is shown in Fig.~\ref{fig:MorseClock}. The clock hand points to the elapsed time in milliseconds (ms) after a system has begun vibrating. The clock face is represented by a polyhedral surface in a Morse-Smale in a convex polyhedron in 3D Euclidean space~\cite{Ludmany2023}, since a Morse-Smale polyhedron is an example of a mechanical shape descriptor ideally suited as clock model because of its underlying piecewise smooth geometry. This form of an instants clock has been chosen to emphasize that each elapsed time $t_k$ (after initial time $t_0$) is a real number in an instants interval \boxed{\left[t_0,t_k\right]\in\mathbb{R}} in which the current time $t_k$ is indeterminate. That is, all elapsed times are approximate. From a planar perspective, the proximity of sets of instants clock times is related to results given for computational proximity in the digital plane~\cite{Peters2019}.  In this example, the instant hand is pointing to an elapsed time between $t_{0.25}$ ms and $t_{0.50}$ ms. Several examples of time-constrained dynamical systems motion recorded in video frames on a coarse-grained 24 hour clock are shown in~\cite[Fig. 9]{Cui2022}.
\qquad \textcolor{blue}{$\blacksquare$}
\end{example}
\vspace*{0.2cm}

\begin{example}
Typical examples of dynamical systems are
\begin{compactenum}[1$^o$]
\item Pendulum system $A$ with a waveform $\varphi(A) = -sin(t)$~\cite{DeLeoYork2024} having a phase portrait with loop nodes $C_0, C_1$, saddle points $s_0,s_1$ on curves $g_0,h_0,g_1,h_1$ like those shown in Fig.~\ref{fig:asymptotic}.
\item Off-road vibrating system radiation with a pair of waveform portraits record in a sequence of IR video frames (see, e.g., Fig.~\ref{fig:case1}).  Let $\varphi(Head)$ be the vibration amplitude of off-road vehicle .  The upper waveform exhibits a dip in the amplitude that portrays a runner vibration, which is non-modulated, with low at 0.04 and high at 0.65 at time $t$.  Let 
\boxed{\boldsymbol{\varphi(m(t))e^{j\omega t}}} represent the off-road spring system vehicle spring system modulated waveform $m(t)$ with an Euler exponential at time $t$. The lower waveform exhibited in Fig.~\ref{fig:case2} gives a sample waveform energy, which is the spring system equivalent of modulating the motion waveform $m(t)$ with the Euler exponential. \qquad \textcolor{blue}{$\blacksquare$}
\end{compactenum}
\end{example}
\vspace*{0.2cm}

\begin{definition}\label{def:characteristic}{\bf (Clocked Characteristic of a subsystem)}.\\
The clocked characteristic of a subsystem $A_i$ of a system $A$ at time $t$ is a mapping \boxed{\varphi_t:A_i\in 2^A\to \mathbb{C}} defined by
$\varphi_t(A_i)=a+bj\in\mathbb{C}, a,b\in\mathbb{R}, j = \sqrt(-1),t\in\mathbb{R}$.
\qquad \textcolor{blue}{$\blacksquare$}
\end{definition}
\vspace*{0.2cm}

\begin{mdframed}[backgroundcolor=green!15]
\label{axiom:+ve char}{\bf  Axiomatic View of a Dynamical System Motion Characteristic.}

\begin{axiom}\label{axiom:complexNo}{\bf (Subsystem Motion Characteristic)}.\\
Let $A_i\in 2^A$ (subsystem $A_i$ in the collection of subsystems $2^A$ in system $A$) that emits changing radiation due to system movements (motion) and let $t$ be an elapsed time clock tick.  The motion characteristic of subsystem motion $A_i\in 2^A$ is a mapping $m_t:A_i\to \mathbb{C}$ defined by
\begin{center}
\boxed{\boldsymbol{ 
m_t(A_i)=a+bj\in \mathbb{C},a,b\in \mathbb{R}, j = \sqrt(-1),t\in\mathbb{R}.
}}
\vspace*{0.2cm}
\end{center} 
i.e., a subsystem $A_i$ motion characteristic of a system $A$ is a mapping $m_t(A_i\in 2^A)\in\mathbb{C}$ at time $t$. 
\qquad \textcolor{blue}{$\blacksquare$}
\end{axiom}
\vspace*{0.2cm}

\begin{remark}
For the motion characteristic, we write \boxed{m(t)} when it is understood that motion is on a subset $A_i\in 2^A$ in a dynamical system $A$.  Axiom~\ref{axiom:complexNo} is consistent with the view~\cite[p. 81]{Blair1976} of the characteristic vector field, represented here with a planer characteristic vector field $\boldsymbol{\xi}$ of a dynamical system with points $p(x,y,t)\in \xi$ that has positive complex characteristic coordinates at clock tick (time) $t$ such that\\
\begin{center}
\boxed{
\varphi_t(A_i\in 2^A) = 
p\in \xi = 
(a-jb)\frac{\partial \xi}{\partial x}
+ (a-jb)\frac{\partial \xi}{\partial y}
+ (a-jb)\frac{\partial \xi}{\partial t}, a,b\in\mathbb{R}.
}
\vspace*{0.2cm}
\end{center}
The 1-1 correspondence between every point $p$ having coordinates in the Euclidean space and points in the complex plane is lucidly introduced by D. Hilbert and S. Cohn-Vossen~\cite[\S 38, 263-265]{Hilbert1952}.  
\qquad \textcolor{blue}{$\blacksquare$}
\end{remark}
\end{mdframed}

\begin{figure}[!ht]
	\centering
	\includegraphics[width=90mm]{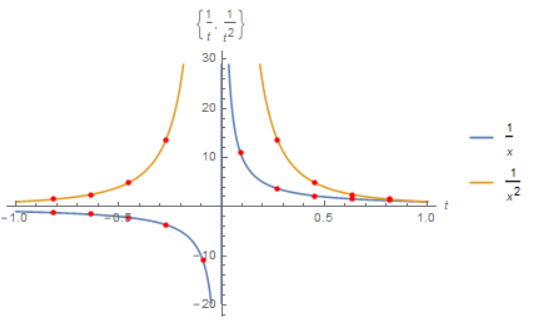}
	\caption{Subsystem characteristic asymptotic rate-of-change $\mathop{limit}\limits_{t\to \infty} \frac{\partial \varphi(A_i\in 2^A)}{\partial t}\to 0$.}
	\label{fig:asymptotic}
\end{figure} 

\vspace*{0.2cm}

\begin{example}
The rates-of-change of $\boldsymbol{f(t)=\frac{1}{t},g(t)=\frac{1}{t^2}}$ approach but do not reach 0 as $t\to \infty$, since $f(t)$ and $g(t)$ approach 0 asymptotically in Fig.~\ref{fig:asymptotic}.
\qquad \textcolor{blue}{$\blacksquare$}
\end{example}
\vspace*{0.2cm}

\begin{remark}
The heat radiated from a vibrating spring system is recorded with IR circuit (for the details on an IR sensor works, see Appendix~\ref{ap:IRcircuit}). The important result of a sequence of dynamical system oscillatory movements in IR video frames (taken together) is motion vector fieldd that resembles the tubular flow depicted in\cite[p.98]{Palis1982}.
\qquad \textcolor{blue}{$\blacksquare$}
\end{remark}

\begin{figure}[!ht]
\includegraphics[width=0.35\textwidth]{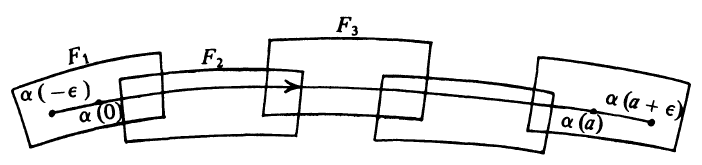}
\caption[]{Sequence of vibrating spring vector fields}
\label{fig:sysPair}
\end{figure}

\begin{figure}[!ht]

	\subfigure[Frame 1,K,N]{
	\centering
	\includegraphics[width=0.2\textwidth]{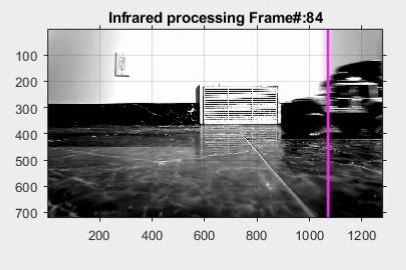}
	\includegraphics[width=0.2\textwidth]{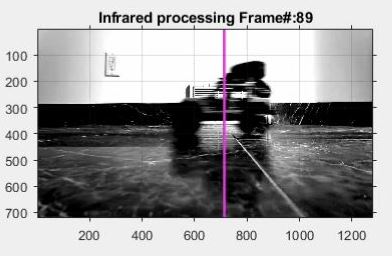}
	\includegraphics[width=0.2\textwidth]{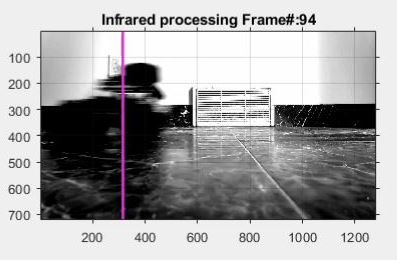}
	}\label{fig:frameset}
	\subfigure[Waveform 1,K,N]{
	\centering
	\includegraphics[width=0.2\textwidth]{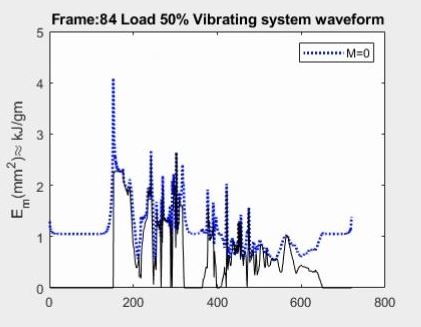}
	\includegraphics[width=0.2\textwidth]{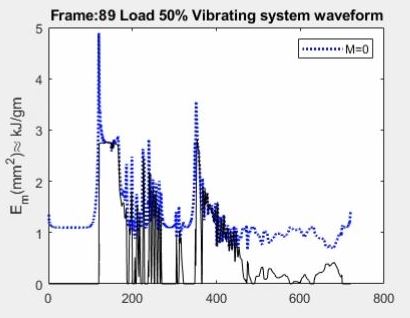}
	\includegraphics[width=0.2\textwidth]{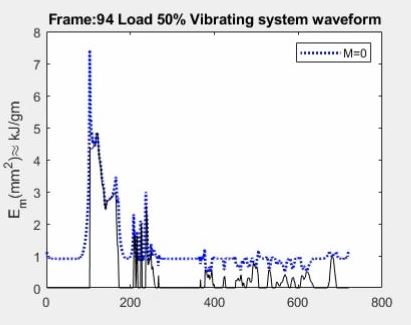}
	}\label{fig:waveformset}
	\caption{runner IR waveform over 1-to-n frames in a video.}
	\label{fig:after-n-frames}
\end{figure}

\begin{mdframed}[backgroundcolor=green!15]
\label{axiom:clocked char}{\bf  Vibrating System Time-Constrained Energy Characteristic.}
\begin{axiom}\label{axiom:waveformEnergy}{\bf (Waveform Energy)}.\\
A measure of dynamical system energy is the area of a finite planar region bounded by system waveform $m(t)$ curve at time $t$,
defined by
\begin{center}
\vspace*{0.2cm}
\boxed{\boldsymbol{
E_{m(t)} = \int_{t_0}^{t_1}\abs{m(t)}^2dt
}.}
\end{center}
\end{axiom}
\end{mdframed}
\vspace*{0.2cm}

\begin{remark}{\bf (Importance of Energy Characteristic Time-of-occurrence}.\\
From Axiom~\ref{axiom:waveformEnergy}, the energy characteristic is potentially different at every tick of the system.  Measurements of changing energy occurs on different measurement scales, depending on the accuracy of the measurement of the energy characteristic.  Evidence of this can be found in the endless variation of the radiation emanating from any vibrating physical system.  For more about the geomtry of camera-recorded system energy as a shape shifter, see~\cite{Peters2020}. 
\qquad \textcolor{blue}{$\blacksquare$}
\end{remark}
\vspace*{0.2cm}

\begin{lemma}\label{lemma:energy}
Dynamical system energy is time-constrained and is always limited.
\end{lemma}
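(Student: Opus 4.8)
The plan is to read the conclusion off the three structural facts already in place: the definition of a dynamical system as time-constrained (Definition~\ref{def:dynamicalSystem0}), the Instants Clock Axiom~\ref{axiom:clock}, and the Waveform Energy Axiom~\ref{axiom:waveformEnergy}. First I would invoke Definition~\ref{def:dynamicalSystem0} together with Definition~\ref{def:dSwaveform} to fix a finite instants interval $[t_0,t_k]\subset\mathbb{R}$ over which the motion waveform $m(t)$ is defined; the Instants Clock Axiom~\ref{axiom:clock} guarantees that between $t_0$ and $t_k$ only finitely many clock ticks occur, since each tick corresponds to a state change of the physical system and the interval is bounded. Hence the domain of integration appearing in Axiom~\ref{axiom:waveformEnergy} is a compact interval of finite length $t_k-t_0$.

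Next I would show that the integrand $\abs{m(t)}^2$ is bounded on $[t_0,t_k]$. The relevant observation is that, by Axiom~\ref{axiom:complexNo}, $m_t(A_i)=a+bj$ with $a,b\in\mathbb{R}$ records the radiation emitted by the moving subsystem $A_i\in 2^A$ at tick $t$, and by Definition~\ref{def:dSwaveform} the waveform is a finite sequence of discrete values sampled at the ticks of the instants clock; a finite set of complex values attains a maximum modulus, so there is a real constant $M$ with $\abs{m(t)}\le M$ for every tick $t$ in the interval. Combining boundedness of the integrand with compactness (finite length) of the domain, the area
\begin{center}
$E_{m(t)} = \int_{t_0}^{t_k}\abs{m(t)}^2\,dt \le M^2\,(t_k-t_0) < \infty$
\end{center}
so the energy characteristic is always limited, and the explicit factor $t_k-t_0$ in the bound exhibits the time-constrained nature of the energy, which is the second assertion of the lemma.

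The main obstacle I anticipate is not the estimate but pinning down why $\abs{m(t)}$ is bounded in a way faithful to the axioms rather than to unstated physical reasoning. The cleanest route is to treat $m(t)$, via Definition~\ref{def:dSwaveform}, as a finite discrete sequence indexed by the clock ticks of Axiom~\ref{axiom:clock}, so the supremum is a maximum over a finite index set and the integral in Axiom~\ref{axiom:waveformEnergy} reduces to a finite Riemann sum that is manifestly finite; an alternative is to adopt, as an implicit regularity hypothesis on $m$, continuity on the compact interval $[t_0,t_k]$ and appeal to the extreme value theorem to produce $M$. Either way, once $M$ is secured the remaining steps are routine, and the two claims of Lemma~\ref{lemma:energy} follow at once.
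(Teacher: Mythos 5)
Your proposal is correct and takes essentially the same route as the paper: time-constrainedness is read off from the bounded temporal interval $\left[t_0,t_1\right]$ in Axiom~\ref{axiom:waveformEnergy}, and finiteness of $E_{m(t)}$ from the boundedness of the planar region whose area defines the energy. If anything, your argument is more explicit than the paper's, which simply asserts that the region is ``finite, bounded''; you supply the missing justification by bounding $\abs{m(t)}\le M$ via the finite discrete sequence of Definition~\ref{def:dSwaveform} and exhibiting the estimate $E_{m(t)}\le M^2(t_k-t_0)<\infty$.
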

\begin{proof}
Let $E_m$ be the energy of a dynamical system, defined in Axiom~\ref{axiom:waveformEnergy}. From Axiom~\ref{axiom:waveformEnergy}, system energy always occurs in a bounded temporal interval $\left[t_0,t_1\right]$.
Hence, $E_m$ is time constrained.
From Axiom~\ref{axiom:clock}, the length of a system waveform is finite, since, from Axiom~\ref{axiom:waveformEnergy}, system duration is finite. From Axiom~\ref{axiom:waveformEnergy}, system energy is derived from the area of a finite, bounded region.  Consequently, system energy is always finite.
\end{proof}
\vspace*{0.2cm}

\begin{theorem}\label{theorem:Em}
If $X$ is a dynamical system with waveform $m(t)$ at time $t$ and which changes with every clock tick, then observe
\begin{compactenum}[1$^o$]
\item System waveform characteristic values are in the complex plane.
\item\label{step:part1} System energy varies with every clock tick.
\item System radiation characteristics are finite.
\item All system characteristics map to the complex plane.
\item Waveform energy decay is a characteristic, which maps to $\mathbb{C}$ at clock tick t ms.
\end{compactenum}
\end{theorem}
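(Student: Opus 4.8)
The plan is to verify each of the five assertions in turn, in each case reducing it to one of the axioms, to Definition~\ref{def:characteristic}, or to Lemma~\ref{lemma:energy}; the theorem is essentially a consolidation of the axiomatic framework, so no new machinery is needed, only a careful bookkeeping of what each axiom supplies.

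For assertion $1^o$ I would invoke Axiom~\ref{axiom:complexNo} directly: the motion characteristic $m_t$ of any subsystem $A_i\in 2^A$ is by definition a map $m_t:A_i\to\mathbb{C}$, so the value $m(t)=a+bj$ of the waveform at every clock tick lies in $\mathbb{C}$. For assertion $4^o$ I would argue the same way but for an arbitrary characteristic: Definition~\ref{def:characteristic} defines the clocked characteristic $\varphi_t:A_i\in 2^A\to\mathbb{C}$, and the Remark following Axiom~\ref{axiom:complexNo} exhibits the point $\varphi_t(A_i)$ explicitly as a complex-coefficient combination of the partials of the characteristic vector field $\xi$; hence every system characteristic — motion, energy, decay — takes values in $\mathbb{C}$, which will also be reused in assertion $5^o$.

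For assertion $2^o$ the key inputs are Axiom~\ref{axiom:clock} and Axiom~\ref{axiom:waveformEnergy}. By hypothesis the waveform $m(t)$ changes at every clock tick, and by Axiom~\ref{axiom:clock} a clock tick occurs precisely when the system changes state; since $E_{m(t)}=\int_{t_0}^{t_1}\abs{m(t)}^2\,dt$ is built from $m(t)$, a change in $m(t)$ over the instants interval forces a change in the bounded planar area it encloses, so $E_{m(t)}$ is re-evaluated — and in general differs — at each tick. For assertion $3^o$ I would cite Lemma~\ref{lemma:energy}: system energy is time-constrained and always limited, since by Axiom~\ref{axiom:clock} the waveform has finite length on the bounded interval $[t_0,t_1]$ and by Axiom~\ref{axiom:waveformEnergy} energy is the area of that finite bounded region; as the radiation characteristics are read off this same finite waveform, they too are finite.

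The part I expect to require the most care is assertion $5^o$, since ``waveform energy decay'' has not been given a formal definition in the preceding text. The plan is to define the decay characteristic at clock tick $t$ as the asymptotic rate of change $\lim_{t\to\infty}\frac{\partial E_{m(t)}}{\partial t}$ depicted in Fig.~\ref{fig:asymptotic} (the energy envelope behaving like $f(t)=\tfrac1t$ or $g(t)=\tfrac1{t^2}$, approaching $0$ without reaching it), observe via Lemma~\ref{lemma:energy} that this quantity is well defined and finite because the underlying energy is bounded, and then apply assertion $4^o$: being a system characteristic, the decay is a clocked map into $\mathbb{C}$ in the sense of Definition~\ref{def:characteristic}, so at clock tick $t$ ms its value is a complex number $a+bj$. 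The only genuine obstacle is pinning down that this decay quantity is well posed and finite, which is exactly the place where Lemma~\ref{lemma:energy} does the work.
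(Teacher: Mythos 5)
Your proposal is correct and follows essentially the same route as the paper: each of the five assertions is discharged by citing the same combination of Axiom~\ref{axiom:clock}, Axiom~\ref{axiom:complexNo}, Axiom~\ref{axiom:waveformEnergy}, Definition~\ref{def:characteristic}, and Lemma~\ref{lemma:energy}. The only divergence is at assertion $5^o$, where the paper simply refers back to the argument for assertion $4^o$, whereas you additionally pin down ``energy decay'' as an asymptotic rate of change and justify its finiteness via Lemma~\ref{lemma:energy} --- a more careful treatment than the paper's own.
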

\begin{proof}$\mbox{}$\\
\begin{compactenum}[1$^o$]
\item From Def.~\ref{def:characteristic}, a system characteristic is a mapping from a subsystem to the complex plane at time $t$,  From Axiom~\ref{axiom:complexNo}, every waveform motion characteristic $m(t)\in \mathbb{C}$ at time $t$, which is the desired result.
\item From Lemma~\ref{lemma:energy}, system energy is time-constrained and always occurs in a bounded temporal interval.  From Axiom~\ref{axiom:clock}, there is a new clock tick at every instant in time $t$ ms. From Axiom~\ref{axiom:waveformEnergy}, system energy varies with every clock tick.
\item From Axiom~\ref{axiom:clock}, all system radiation characteristics are finite, since system duration is finite.
\item\label{step:characteristic} From Axiom~\ref{axiom:complexNo}, every system $A$ characteristic is a mapping from a subsystem $A_i\in 2^A$ to the complex plane. From Axiom~\ref{axiom:waveformEnergy}, each instance of an energy characteristic value in $\mathbb{C}$ occurs a clock tick $t$, which is the desired result.
\item From the proof of step~\ref{step:characteristic}, the desired result follows.
\end{compactenum}
\end{proof}
\vspace*{0.2cm}

\begin{figure}[!ht]
\centering
\includegraphics[width=0.25\textwidth]{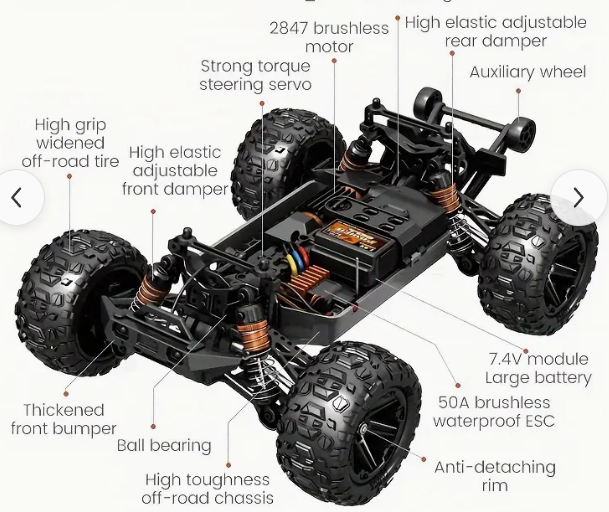}
\caption {Offroad spring interior design}
\label{fig:C2}
\end{figure}

\begin{figure}[!ht]
\centering
\includegraphics[width=0.25\textwidth]{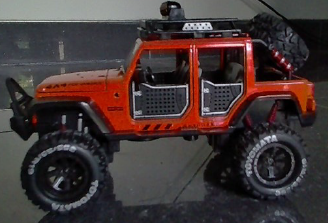}
\caption{Offroad spring system vehicle}
\label{fig:SV}
\end{figure}

\section{Conclusion}
An axiomatic approach to the characteristic energy commonly found in time-constrained vibrating dynamical waveforms introduced in this paper carries forward earlier work on dynamical systems introduced in~\cite{Cui2022,Tiwari2024}.  We have shown that the waveform energy decay characteristic is time-constrained and is always limited (see Lemma~\ref{lemma:energy} and part 5 of Theorem~\ref{theorem:Em}).

\section*{Acknowledgements}
We thank Tane Vergili, Surhabi Tiwari, Anna Di Concilio, Rob Alfano, Peter Smith, Arturo Tozzi and Sheela Ramanna for their insights concerning elements of the topology of dynamical systems.

This research has been supported by the Natural Sciences \&
Engineering Research Council of Canada (NSERC) discovery grant 185986 
and Instituto Nazionale di Alta Matematica (INdAM) Francesco Severi, Gruppo Nazionale per le Strutture Algebriche, Geometriche e Loro Applicazioni grant 9 920160 000362, n.prot U 2016/000036 and Scientific and Technological Research Council of Turkey (T\"{U}B\.{I}TAK) Scientific Human
Resources Development (BIDEB) under grant no: 2221-1059B211301223.

\noindent\textbf{Declaration of conflict of interest.} The authors declare that there is no conflict of interest.

\begin{appendix}\label{app}

\section{Sample off-road spring system}
	\indent The spring system in Fig.~\ref{fig:SV} helps reduces the impact caused by terrain for the off-road vehicles. It absorb and dampen the energy from impacts and vibrations caused by uneven terrain. This helps to maintain vehicle stability, enhance comfort, and protect the vehicle’s structure.The spring system as shown in Fig.~\ref{fig:C2} including 2 major parts, springs and dampers. Springs are the core components that absorb shock. They compress and expand to accommodate bumps and dips in the terrain. On the other hand, dampers control the rate at which the springs compress and extend. They prevent the vehicle from bouncing excessively after hitting a bump.

	Coil springs in Fig.~\ref{fig:SV} made of coiled steel and provide a balance between strength and flexibility. They compress under load and expand to absorb impacts. When an off-road vehicle encounters rough terrain, the springs compress to absorb the initial impact and dampers then control the rebound of the springs to prevent excessive oscillation, thereby reducing vibration and providing a smoother ride.

\section{IR Circuit}\label{ap:IRcircuit}
	\indent IR sensor is the key element for collecting Infrared video. IR sensor is a device that emits infrared radiations in order to sense object of the surroundings. These types of radiations are invisible to human eyes, but the receiver unit of IR sensor can detect these radiations.\newline
	There are two types of IR sensors: active and passive. The active IR sensor consists of both infrared source and infrared detector. The infrared source includes the infrared laser diode emits specific wavelength in the infrared spectrum. Infrared detectors include photodiodes or phototransistors receiving specific wavelength in the infrared spectrum. The passive IR sensor is basically infrared detectors, only detect specific wavelengths in the infrared spectrum.\newline
	\begin{figure}
	\centering
	\includegraphics[width = 0.75\textwidth]{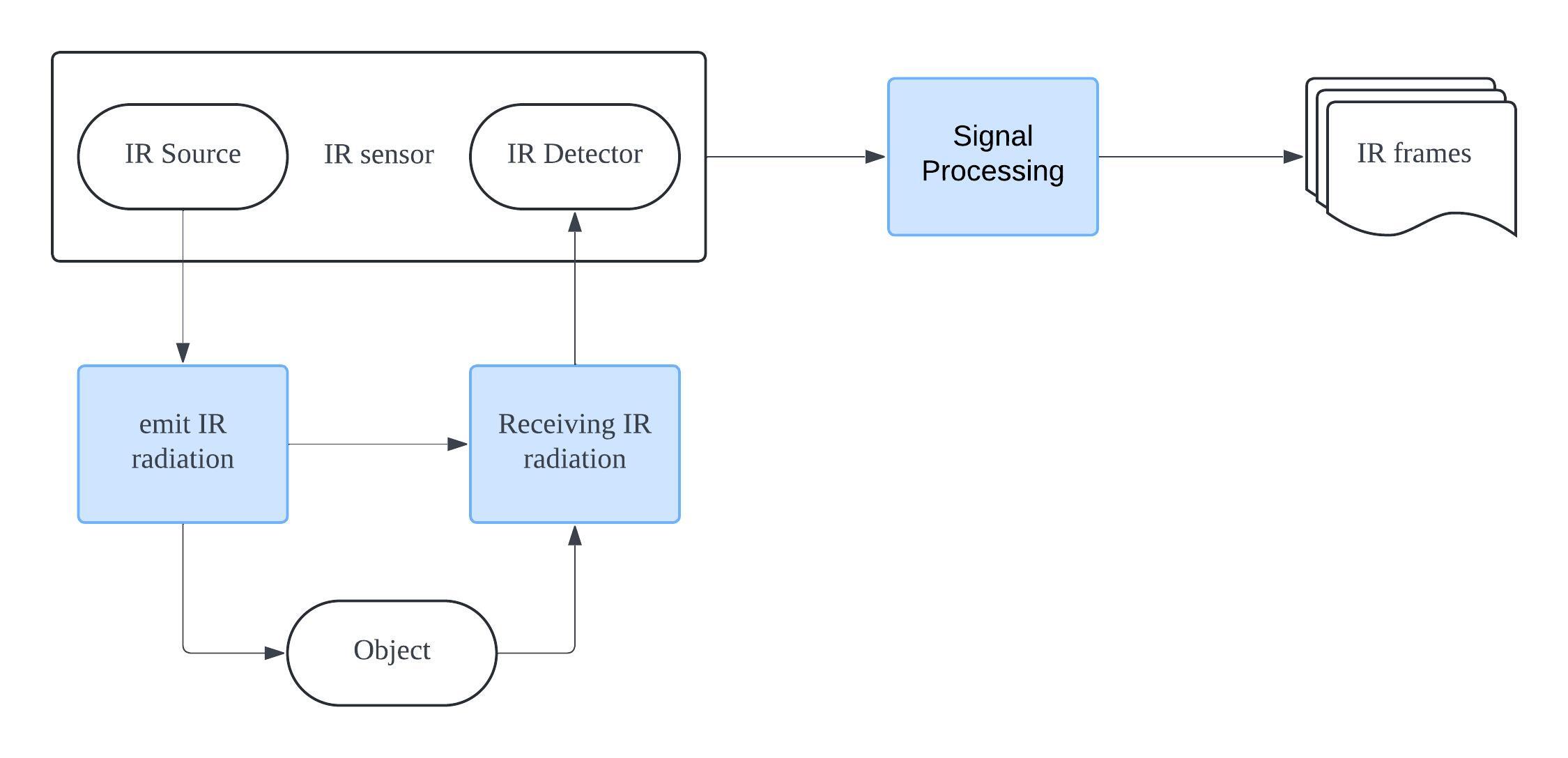}
	\caption {Infrared sensor flowchart}
	\label{fig:IRflowchart}
	\end{figure}
	The Fig.~\ref{fig:IRflowchart} illustrates the working principle of IR censor. In short, the energy emitted by the infrared source is reflected by an object and falls on the infrared detector. The output voltage then converts to IR video signal through signal processing.\newline
	\begin{figure}
		\centering
	\includegraphics[width = 0.75\textwidth]{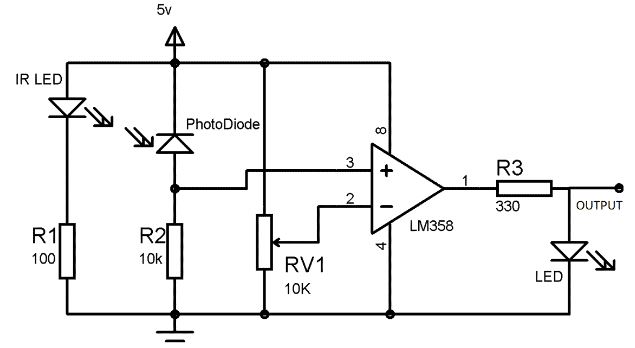}
	\caption {Infrared sensor circuit}
	\label{fig:IRcircuit}
	\end{figure}	
	In the Fig.~\ref{fig:IRcircuit}, The IR LED working as infrared source emitting infrared radiations. The photodiode receiving infrared radiation reflected from object. The amplifier controlling output voltage, it will change in proportion to the infrared radiations which the photodiode received. \newline
\vspace*{0.2cm}

\section{Comparisons between spring system-modulated and Matlab-modulated waveform}


\begin{figure}[!ht]

	\subfigure[Case1:Non-modulated waveform Energy Plot]{
	\centering
	\includegraphics[width=0.55\textwidth]{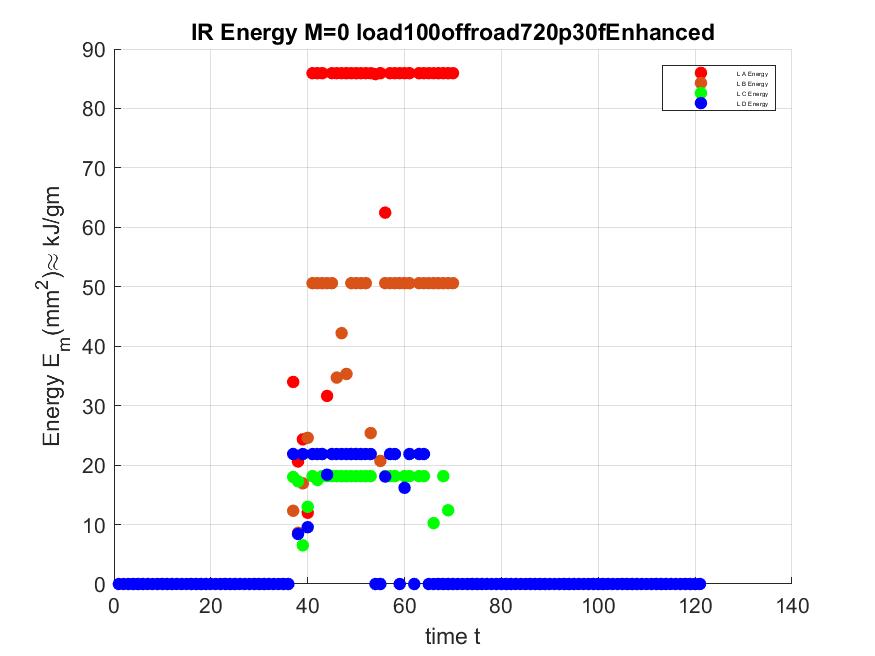}
	}\label{fig:energycase1}
	\subfigure[Case2:Modulated waveform Energy Plot]{
	\centering
	\includegraphics[width=0.55\textwidth]{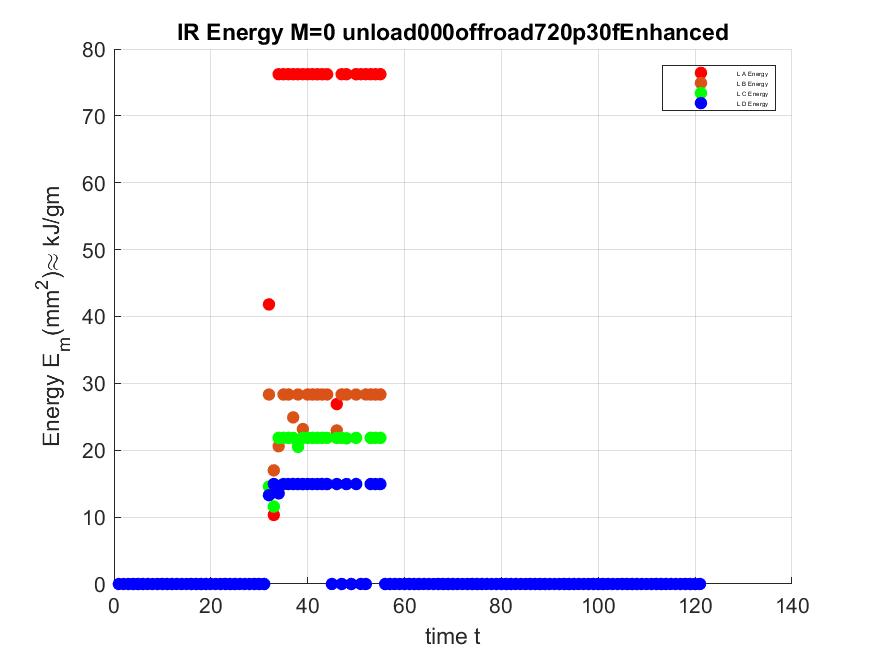}
	}\label{fig:energycase2}
	\caption{Energy Plot of Non-modulated Waveform and Modulated Waveform}
	\label{fig:energycaseall}
\end{figure}
\vspace*{0.2cm}
  The over all enerngy plot comparison between 2 case can be find in Fig.~\ref{fig:energycaseall}. There are 3 examples listed below indecates the begining, middle and the end of frame waveforms comparison.\\

\begin{example}\label{ex-a}
The waveform comparison for start of frames shown in Fig.~\ref{fig:exacaseall}. The vehicle here shown only the head in the frames and moving towards left. The energy table of each lobe in Non-modulated waveform see Table ~\ref{table:exacase1}.The energy table of each lobe in Modulated waveform see Table ~\ref{table:exacase2}. It is observed that the modulated unloaded waveform shared multiple similarities in shapes with the non-modulated loaded waveform. Compare the energy table of both modulated waveform and non-modulated waveform we can see the lobe energy of both waveform sharing similar level differences.  \\
\end{example}

\begin{figure}[!ht]

	\subfigure[Ex.~\ref{ex-a} Case1:Non-modulated waveform with 100\% loads]{
	\centering
	\includegraphics[width=0.4\textwidth]{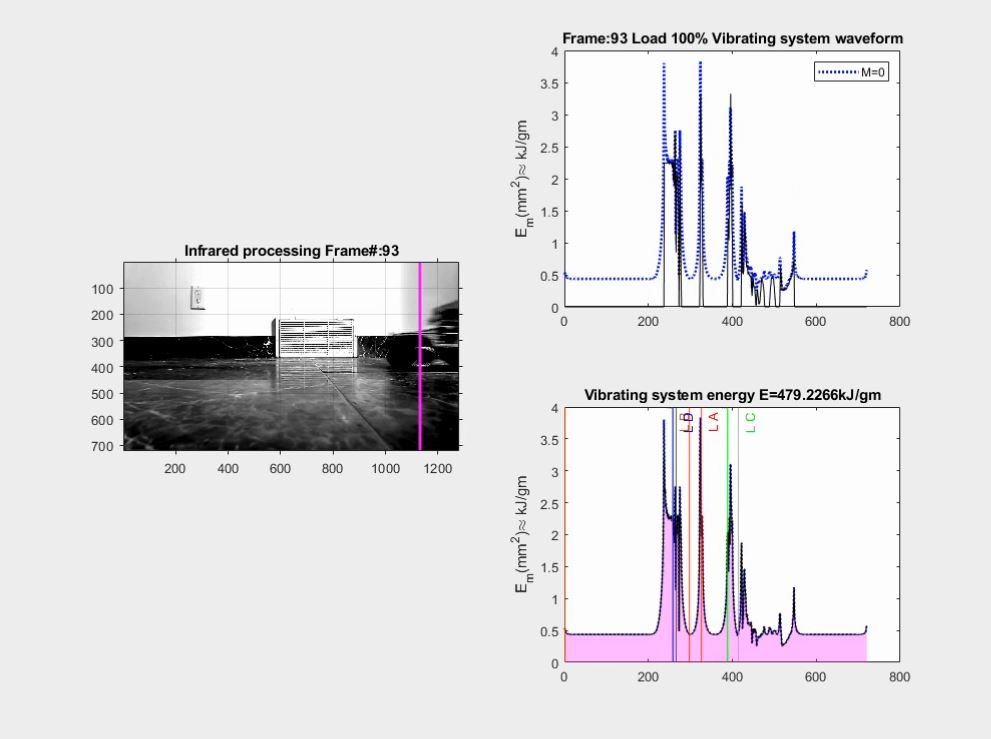}
	}\label{fig:exacase1}
	\subfigure[Ex.~\ref{ex-a} Case2:Modulated waveform with 0\% loads]{
	\centering
	\includegraphics[width=0.4\textwidth]{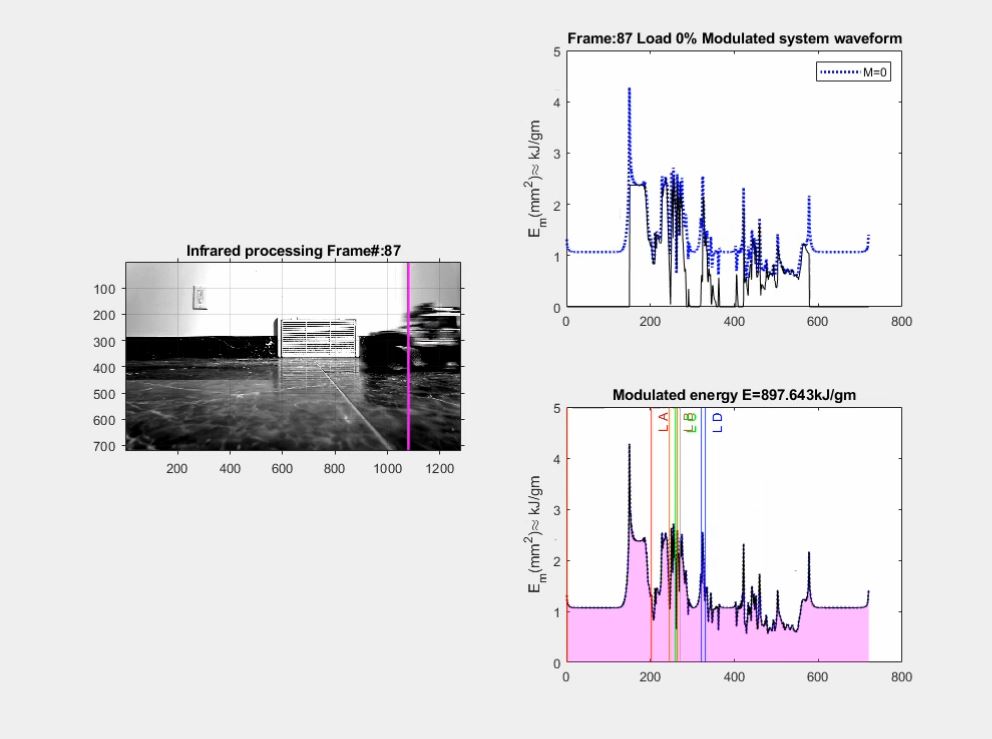}
	}\label{fig:exacase2}
	\caption{Ex.~\ref{ex-a} Non-modulated Waveform and Modulated Waveform}
	\label{fig:exacaseall}
	
\end{figure}

\begin{table}[!ht]
    \centering
    \begin{tabular}{|l|l|l|l|l|l|l|l|l|l|}
    \hline
        Frames & Lobe A & Lobe B & Lobe C & Lobe D  \\ \hline
        88 & 20.60368582 & 8.614353021 & 17.27595106 & 8.411250532  \\ \hline
        89 & 24.33440983 & 16.93526054 & 6.504560841 & 21.86357325  \\ \hline
        90 & 11.97927556 & 24.59000306 & 12.97974594 & 9.553969771  \\ \hline
        91 & 85.93134179 & 50.61006225 & 18.14882145 & 21.86357325  \\ \hline
        92 & 85.93134179 & 50.61006225 & 17.49519037 & 21.86357325  \\ \hline
        93 & 85.93134179 & 50.61006225 & 18.14882145 & 21.86357325  \\ \hline
    \end{tabular}
		\caption{Ex.~\ref{ex-a} Energy Table of Case 1: Non-modulated Waveform}
		\label{table:exacase1}
\end{table}

\begin{table}[!ht]
		\centering
    \begin{tabular}{|l|l|l|l|l|l|l|l|l|l|}
    \hline
        Frames & Lobe A & Lobe B & Lobe C & Lobe D  \\ \hline
        82 & 41.80832408 & 28.33540646 & 14.57766015 & 13.27935104  \\ \hline
        83 & 10.32399502 & 16.99477953 & 11.57665986 & 14.94586226  \\ \hline
        84 & 76.23429076 & 20.60028253 & 21.84815468 & 13.55562371  \\ \hline
        85 & 76.23429076 & 28.33540646 & 21.84815468 & 14.94586226  \\ \hline
        86 & 76.23429076 & 28.33540646 & 21.84815468 & 14.94586226  \\ \hline
        87 & 76.23429076 & 24.91318173 & 21.84815468 & 14.94586226  \\ \hline
    \end{tabular}
		\caption{Ex.~\ref{ex-a} Energy Table of Case 2: Modulated Waveform}
		\label{table:exacase2}
\end{table}

\begin{example}\label{ex-b}
The waveform comparison for middle of frames shown in Fig.~\ref{fig:exbcaseall}. The vehicle here shown whole parts in the frames and moving towards left. The energy table of each lobe in Non-modulated waveform see Table ~\ref{table:exbcase1}.The energy table of each lobe in Modulated waveform see Table ~\ref{table:exbcase2}. The result agrees with Ex.~\ref{ex-a}.\\
\end{example}
\vspace*{0.2cm}

\begin{figure}[!ht]

	\subfigure[Ex.~\ref{ex-b} Case1:Non-modulated waveform with 100\% loads]{
	\centering
	\includegraphics[width=0.4\textwidth]{case1LoadedFrame.jpg}
	}\label{fig:exbcase1}
	\subfigure[Ex.~\ref{ex-b} Case2:Modulated waveform with 0\% loads]{
	\centering
	\includegraphics[width=0.4\textwidth]{case2ModulatedFrame.jpg}
	}\label{fig:exbcase2}
	\caption{Ex.~\ref{ex-b} Non-modulated Waveform and Modulated Waveform}
	\label{fig:exbcaseall}
	
\end{figure}

\begin{table}[!ht]
    \centering
    \begin{tabular}{|l|l|l|l|l|l|l|l|l|l|}
    \hline
        Frames & Lobe A & Lobe B & Lobe C & Lobe D  \\ \hline
        95 & 85.93134179 & 50.61006225 & 18.14882145 & 21.86357325  \\ \hline
        96 & 85.93134179 & 34.72112252 & 18.14882145 & 21.86357325  \\ \hline
        97 & 85.93134179 & 42.19765703 & 18.14882145 & 21.86357325  \\ \hline
        98 & 85.93134179 & 35.33620105 & 18.14882145 & 21.86357325  \\ \hline
        99 & 85.93134179 & 50.61006225 & 18.14882145 & 21.86357325  \\ \hline
        100 & 85.93134179 & 50.61006225 & 18.14882145 & 21.86357325  \\ \hline
    \end{tabular}
		\caption{Ex.~\ref{ex-b} Energy Table of Case 1: Non-modulated Waveform}
		\label{table:exbcase1}
\end{table}

\begin{table}[!ht]
		\centering
    \begin{tabular}{|l|l|l|l|l|l|l|l|l|l|}
    \hline
        Frames & Lobe A & Lobe B & Lobe C & Lobe D  \\ \hline
        86 & 76.23429076 & 28.33540646 & 21.84815468 & 14.94586226 \\ \hline
        87 & 76.23429076 & 24.91318173 & 21.84815468 & 14.94586226  \\ \hline
        88 & 76.23429076 & 28.33540646 & 20.51180804 & 14.94586226  \\ \hline
        90 & 76.23429076 & 23.16316581 & 21.84815468 & 14.94586226  \\ \hline
        91 & 76.23429076 & 28.33540646 & 21.84815468 & 14.94586226  \\ \hline
        92 & 76.23429076 & 28.33540646 & 21.84815468 & 14.94586226  \\ \hline
    \end{tabular}
		\caption{Ex.~\ref{ex-b} Energy Table of Case 2: Modulated Waveform}
		\label{table:exbcase2}
\end{table}

\begin{example}\label{ex-c}
The waveform comparison for end of frames shown in Fig.~\ref{fig:exccaseall}. The vehicle here left the frames, only the shadow of the vehicle in the frames and moving towards left. The energy table of each lobe in Non-modulated waveform see Table ~\ref{table:exccase1}.The energy table of each lobe in Modulated waveform see Table ~\ref{table:exccase2}. The result agrees with Ex.~\ref{ex-a} and Ex.~\ref{ex-b}.\\
\end{example}

\begin{figure}[!ht]

	\subfigure[Ex.~\ref{ex-c} Case1:Non-modulated waveform with 100\% loads]{
	\centering
	\includegraphics[width=0.4\textwidth]{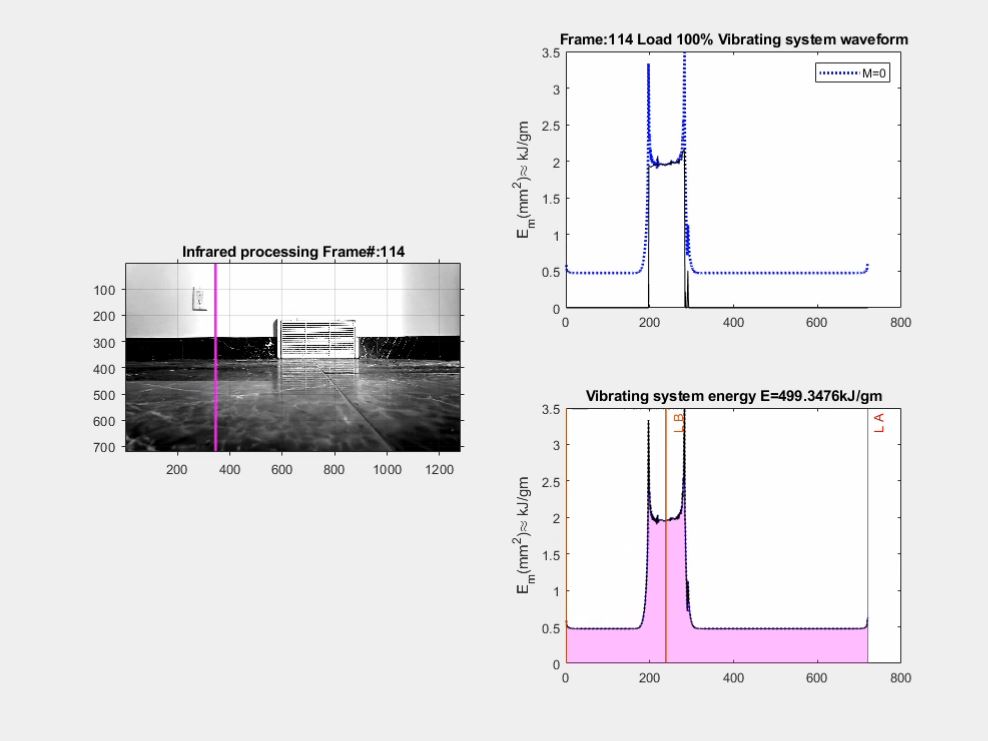}
	}\label{fig:exccase1}
	\subfigure[Ex.~\ref{ex-c} Case2:Modulated waveform with 0\% loads]{
	\centering
	\includegraphics[width=0.4\textwidth]{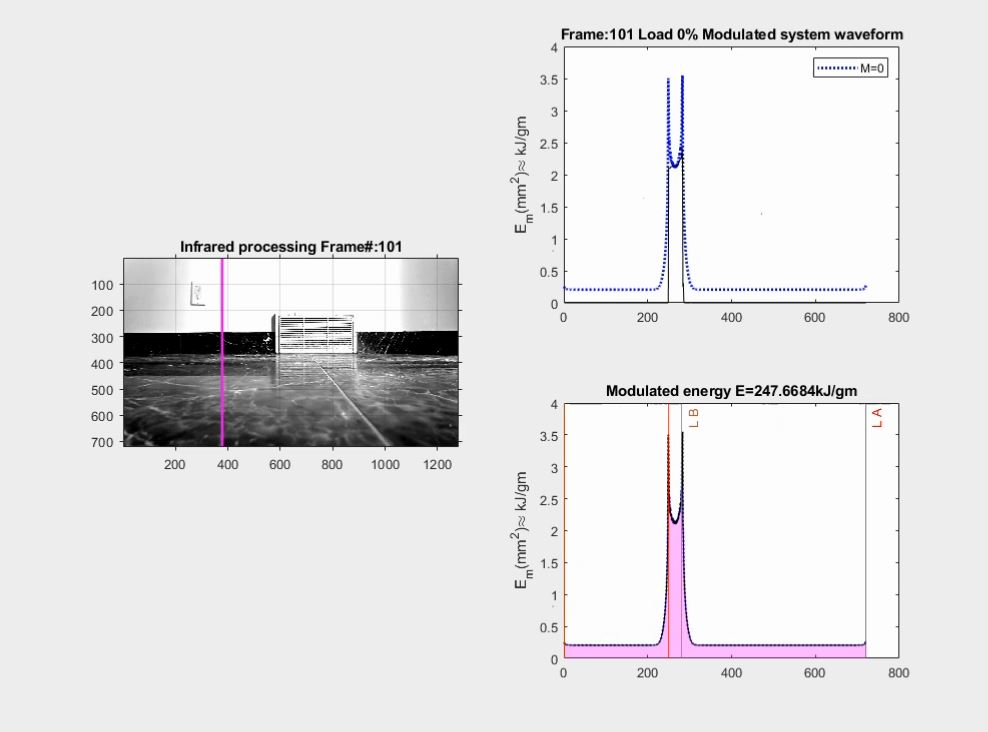}
	}\label{fig:exccase2}
	\caption{Ex.~\ref{ex-c} Non-modulated Waveform and Modulated Waveform}
	\label{fig:exccaseall}
	
\end{figure}

\begin{table}[!ht]
    \centering
    \begin{tabular}{|l|l|l|l|l|l|l|l|l|l|}
    \hline
        Frames & Lobe A & Lobe B & Lobe C & Lobe D  \\ \hline
        113 & 85.93134179 & 50.61006225 & 18.14882145 & 21.86357325  \\ \hline
        114 & 85.93134179 & 50.61006225 & 18.14882145 & 21.86357325  \\ \hline
        115 & 85.93134179 & 50.61006225 & 0 & 0  \\ \hline
        116 & 85.93134179 & 50.61006225 & 10.23733776 & 0  \\ \hline
        117 & 85.93134179 & 50.61006225 & 0 & 0  \\ \hline
        118 & 85.93134179 & 50.61006225 & 18.14882145 & 0  \\ \hline
    \end{tabular}
		\caption{Ex.~\ref{ex-c} Energy Table of Case 1: Non-modulated Waveform}
		\label{table:exccase1}
\end{table}

\begin{table}[!ht]
		\centering
    \begin{tabular}{|l|l|l|l|l|l|l|l|l|l|}
    \hline
        Frames & Lobe A & Lobe B & Lobe C & Lobe D  \\ \hline
        100 & 76.23429076 & 28.33540646 & 21.84815468 & 14.94586226  \\ \hline
        101 & 76.23429076 & 0 & 0 & 0  \\ \hline
        102 & 76.23429076 & 28.33540646 & 0 & 0  \\ \hline
        103 & 76.23429076 & 28.33540646 & 21.84815468 & 14.94586226  \\ \hline
        104 & 76.23429076 & 28.33540646 & 21.84815468 & 14.94586226  \\ \hline
        105 & 76.23429076 & 28.33540646 & 21.84815468 & 14.94586226  \\ \hline
    \end{tabular}
		\caption{Ex.~\ref{ex-c} Energy Table of Case 2: Modulated Waveform}
		\label{table:exccase2}
\end{table}

In conclusion, there are multiple evidence from the experiments showing that by applying modulating algorithm discussed in this paper, the modulated unloaded waveform of vibration system in IR video is able to predict the performance of loaded waveform of vibration system. This modulating algorithm can put in used for payload prediction in transportation field for variety research purpose.

\end{appendix}

\bibliographystyle{amsplain}
\bibliography{NSrefs}

\end{document}